\newtheorem{theorem}{Theorem}[section]
\newtheorem{prop}{Proposition}
\numberwithin{equation}{section}
\begin{document}

\date{}
\title{Exponentiated Extended Weibull-Power Series \\ Class of Distributions}
\author{ S. Tahmasebi$^1$ , A. A. Jafari$^{2,}$ \thanks{Corresponding:aajafari@yazd.ac.ir} \\
{\small $^1$Department of Statistics, Persian Gulf University, Bushehr, Iran}\\
{\small $^{2}$Department of Statistics, Yazd University, Yazd, Iran}}

\date{}
\maketitle
\begin{abstract}
In this paper, we introduce a new class of distributions by compounding the exponentiated extended Weibull family and power series family.
 This distribution contains several lifetime models such as the complementary extended Weibull-power series, generalized exponential-power series,
generalized linear failure rate-power series, exponentiated Weibull-power series, generalized modified Weibull-power series, generalized Gompertz-power series and exponentiated extended Weibull distributions as special cases. We obtain several properties of this new class of  distributions such as Shannon entropy, mean residual life, hazard  rate function, quantiles and moments. The maximum likelihood estimation procedure via a EM-algorithm is presented. 
\end{abstract}

\noindent {\bf Keywords:} EM-algorithm, Exponentiated family, Maximum likelihood estimation, Power series distributions.

\section{Introduction}

The extended Weibull (EW) family  contains various well-known distributions such as exponential, Pareto, Gompertz, Weibull, linear failure rate \citep{barlow-68}, modified Weibull \citep{la-xi-mu-03}, additive Weibull \citep{xi-ch-95,al-yu-13} and Chen \citep{chen-00new} distributions.
For more details see
\cite{na-ko-05}
and
\cite{ph-la-07}.

Using the given method by
\cite{gu-ku-99}, the EW family can be generalized. We call it exponentiated extended Weibull (EEW) distribution.
The cumulative distribution function (cdf) of this distribution is
\begin{eqnarray}\label{1e}
G(x;\alpha,\beta,\boldsymbol\Theta)=[1-e^{-\alpha H(x;\boldsymbol\Theta)}]^{\beta},\;\;\alpha>0,\;\; \;\beta>0,\;\;\;x\geq0,
\end{eqnarray}
and its probability density function (pdf) is
\begin{eqnarray}\label{fEEW}
{\rm g}(x;\alpha,\beta,\boldsymbol\Theta)=\alpha\beta h(x;\boldsymbol\Theta)e^{-\alpha H(x;\boldsymbol\Theta)}[1-e^{-\alpha H(x;\boldsymbol\Theta)}]^{\beta-1},
\end{eqnarray}
where $\boldsymbol\Theta$ is a vector of parameters, and $H(x;\boldsymbol\Theta)$ is a non-negative, continuous, monotone increasing, differentiable function of
$x$ such that $H(x;\boldsymbol \Theta)\rightarrow 0$ as $x \rightarrow 0^{+}$ and $H(x;\boldsymbol\Theta)\rightarrow \infty$ as $x \rightarrow \infty$.
It is denoted by ${\rm EEW}(\alpha,\beta,\boldsymbol\Theta)$.

The EEW distribuyion is a flexible  family and  contains many exponentiated distributions such as generalized exponential \citep{gu-ku-99}, exponentiated Weibull
\citep{mu-sr-93}, generalized Rayleigh  \citep{su-pa-01,ku-ra-05}, generalized modified Weibull \citep{ca-or-co-08}, generalized linear failure rate \citep{sa-ku-09}, and generalized Gompertz \citep{el-al-al-13} distributions.

In recent years, many distributions to model lifetime data have been introduced. The ba-
sic idea of introducing these models is that a lifetime of a system with $N$ (discrete random
variable) components and the positive continuous random variable, say $X_i$ (the lifetime of ith
omponent), can be denoted by the non-negative random variable $Y = \min(X_1, \dots ,X_N)$ or
$Y = \max(X_1, \dots ,X_N)$, based on whether the components are series or parallel.

In this paper, we compound the EEW family and power series distributions,  and  introduce a new class of distribution.
This class of distributions can be applied to reliability problems and its some properties are investigated in this paper.
We call it exponentiated extended Weibull-power series (EEWPS) class of distributions.  In similar way, some distributions are proposed in literature: The exponential-power series (EP) distribution by \cite{ch-ga-09},  Weibull-power series (WPS) distributions  by  \cite{mo-ba-11},
generalized exponential-power series (GEP)  distribution  by \cite{ma-ja-12},
complementary exponential power series by \cite{fl-bo-ca-13},
extended Weibull-power series (EWPS)  distribution by \cite{si-bo-di-co-13},
double bounded Kumaraswamy-power series by \cite{bi-ne-13},
Burr-power series by \cite{si-co-13},
generalized linear failure rate-power series (GLFRP)  distribution  by \cite{al-sh-14},
Birnbaum-Saunders-power series distribution by \cite{bo-si-co-14},
linear failure rate-power series by \cite{ma-ja-14}, and complementary extended Weibull-power series by \cite{co-si-14}.
Similar procedures are used by \cite{ro-lo-ca-le-12}, \cite{lu-sh-11},  \cite{na-co-or-14}  and \cite{lo-ma-ro-14}.
For compounding continuous distributions with discrete distributions,
 \cite{na-po-ri-13}
 introduced the package\textsf{ Compounding} in R software \citep{rdev-14}.

We provide three motivations for the EEWPS class of distributions, which can be applied in
some interesting situations as follows:
(i) This new class of distributions due to the stochastic representation $Y = \max(X_1,\dots,X_N)$,
can arises in parallel systems with identical components, where each component has the EEW
distribution lifetime. This model appears in many industrial applications and biological organisms
which the lifetime of the event is only the maximum ordered lifetime value among
all causes.
(ii)  The EEWPS class of distributions gives a
reasonable parametric fit to some modeling phenomenon with non-monotone hazard rates such as the
bathtub-shaped, unimodal and increasing-decreasing-increasing hazard rates, which are common
in reliability and biological studies. (iii) The time to the last failure can be appropriately
modeled by the EEWPS class of distributions.

The remainder of this paper is organized as follows:  The pdf and failure rate function of the new class of distributions are given in Section \ref{sec.dis}.
The special cases of the EEWPS distribution are considered in Section \ref{sec.spe}.
Some properties such as quantiles, moments, order statistics, Shannon entropy and mean residual life  are given in Section \ref{sec.pro}.
Estimation of parameters by maximum likelihood are discussed in Section \ref{sec.est}. Application to a real data set is presented  in Section \ref{sec.ex}.

\section{Introducing new family }
\label{sec.dis}

A discrete random variable, $N$  is a member of power series distributions (truncated at zero) if its probability mass function (pmf) is given by
\begin{equation}\label{fps}
p_{n}=P(N=n)=\frac{a_{n}\lambda^{n}}{C(\lambda)},\;\; n=1,2,\dots,
\end{equation}
where $a_{n}\geq0$, $C(\lambda)=\sum\limits_{n=1}^{\infty}a_{n}\lambda^{n}$, and $\lambda\in(0,s)$ is chosen in a way such that $C(\lambda)$ is finite and its first, second and third derivatives are defined and shown by $C'(.)$, $C''(.)$ and $C'''(.)$, respectively. The term ``power series distribution" is generally credited to
\cite{noack-50}. This family of distributions includes many of the most common distributions, including the binomial, Poisson, geometric, negative binomial, logarithmic distributions. For more details about power series distributions, see
\cite{jo-ke-ko-05},
 page 75.

\begin{theorem}
Let $N$ be a random variable denoting the number of failure causes which it is a member of power series distributions with pmf in \eqref{fps}. Also,  For given $N$, let $X_{1},X_{2},...,X_{N}$ be independent identically distributed random variables from EEW distribution with pdf in \eqref{fEEW}. Then $X_{(N)}=\max_{1\leq i\leq N}\{X_{i}\}$ has EEWPS class of distributions is denoted by ${\rm EEWPS}(\alpha,\beta,\lambda,$ $\boldsymbol\Theta)$ and has the following pdf:
\begin{equation}\label{fEEWPS}
f(x)=\alpha\beta\lambda h(x;\boldsymbol\Theta)e^{-\alpha H(x;\boldsymbol\Theta)}(1-e^{-\alpha H(x;\boldsymbol\Theta)})^{\beta-1}
\frac{C'\left(\lambda(1-e^{-\alpha H(x;\boldsymbol\Theta)})^{\beta}\right)}{C(\lambda)}, \;\;x>0.
\end{equation}
\end{theorem}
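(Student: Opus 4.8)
The plan is to condition on the realised value of $N$ and then marginalise by the law of total probability. Given $N=n$, the variable $X_{(n)}=\max\{X_1,\dots,X_n\}$ is the maximum of $n$ independent ${\rm EEW}(\alpha,\beta,\boldsymbol\Theta)$ variables, so its conditional cdf is $G(x;\alpha,\beta,\boldsymbol\Theta)^{\,n}$ with $G$ as in \eqref{1e}. Hence the cdf of $X_{(N)}$ is
\[
F(x)=\sum_{n=1}^{\infty}P(N=n)\,G(x;\alpha,\beta,\boldsymbol\Theta)^{\,n}
=\frac{1}{C(\lambda)}\sum_{n=1}^{\infty}a_{n}\big(\lambda\,G(x;\alpha,\beta,\boldsymbol\Theta)\big)^{n}
=\frac{C\!\big(\lambda\,G(x;\alpha,\beta,\boldsymbol\Theta)\big)}{C(\lambda)},
\]
where the last step is just the definition $C(t)=\sum_{n\ge 1}a_{n}t^{n}$ evaluated at $t=\lambda\,G(x;\alpha,\beta,\boldsymbol\Theta)$.

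The second step is to differentiate $F$ in $x$. By the chain rule and $G'(x;\alpha,\beta,\boldsymbol\Theta)={\rm g}(x;\alpha,\beta,\boldsymbol\Theta)$ from \eqref{1e}--\eqref{fEEW},
\[
f(x)=F'(x)=\frac{\lambda\,{\rm g}(x;\alpha,\beta,\boldsymbol\Theta)\,C'\!\big(\lambda\,G(x;\alpha,\beta,\boldsymbol\Theta)\big)}{C(\lambda)}.
\]
It then only remains to insert the explicit forms $G(x;\alpha,\beta,\boldsymbol\Theta)=(1-e^{-\alpha H(x;\boldsymbol\Theta)})^{\beta}$ and ${\rm g}(x;\alpha,\beta,\boldsymbol\Theta)=\alpha\beta h(x;\boldsymbol\Theta)e^{-\alpha H(x;\boldsymbol\Theta)}(1-e^{-\alpha H(x;\boldsymbol\Theta)})^{\beta-1}$ from \eqref{fEEW}, which reproduces \eqref{fEEWPS} verbatim.

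The only point deserving a word of care — and the closest thing to an obstacle — is the analytic bookkeeping around the power series: that $\sum_{n\ge1}a_{n}(\lambda G)^{n}$ indeed sums to $C(\lambda G)$ and that this is differentiable with $\tfrac{d}{dt}C(t)=C'(t)$ at the relevant argument. Both are immediate because $G$ is a cdf, so $0\le G(x;\alpha,\beta,\boldsymbol\Theta)\le 1$ and therefore $t=\lambda\,G(x;\alpha,\beta,\boldsymbol\Theta)$ satisfies $0\le t\le\lambda<s$, i.e.\ $t$ lies in the interval of convergence of $C$, where, by the hypotheses following \eqref{fps}, $C$ is smooth and may be manipulated termwise. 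No deeper difficulty is anticipated. As a consistency check one sees directly that $f$ integrates to one, since $F(0^{+})=C(0)/C(\lambda)=0$ (the power series \eqref{fps} has no constant term) and $F(\infty)=C(\lambda)/C(\lambda)=1$.
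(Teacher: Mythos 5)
Your proof is correct and follows essentially the same route as the paper: condition on $N=n$ to get the conditional cdf $G(x;\alpha,\beta,\boldsymbol\Theta)^{n}=[1-e^{-\alpha H(x;\boldsymbol\Theta)}]^{n\beta}$, sum against the power-series weights to obtain $F(x)=C(\lambda G(x;\alpha,\beta,\boldsymbol\Theta))/C(\lambda)$, and differentiate. Your added remarks on termwise summation, differentiability inside the radius of convergence, and the check that $F$ is a genuine cdf make the argument slightly more careful than the paper's, but the substance is identical.
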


\begin{proof}
The conditional cdf of $X_{(N)}\mid N=n$  has ${\rm EEW}(\alpha,n\beta,\boldsymbol\Theta)$.
Hence,
\begin{equation}\label{eq.PXN}
P(X_{(N)}\leq x,N=n)=\frac{a_{n}\lambda^{n}}{C(\lambda)}[1-e^{- \alpha H(x;\boldsymbol\Theta)}]^{n\beta},
\end{equation}
and the marginal cdf of $X_{(N)}$ is
\begin{equation}\label{FEEWPS}
F(x)=\frac{C(\lambda (1-e^{-\alpha H(x;\boldsymbol\Theta)})^{\beta})}{C(\lambda)}, \;\; x>0.
\end{equation}
The derivative of $F$ with respect to $x$ is \eqref{fEEWPS}. Therefore,  $X_{(N)}$ has EEWPS distribution.
\end{proof}

\begin{prop}
 The pdf of EEWPS class can be expressed as infinite linear combination of density of order distribution, i.e. it can be written as
\begin{eqnarray}\label{eq.fpngn}
f(x)=\sum\limits_{n=1}^{\infty} p_{n} g_{(n)}(x;\alpha,n\beta,\boldsymbol\Theta),
\end{eqnarray}
where $g_{(n)}(x;\alpha,n\beta,\boldsymbol\Theta)$ is the pdf of 
EEW distribution with parameters $\alpha$, $n\beta$ and $\boldsymbol\Theta$.
\end{prop}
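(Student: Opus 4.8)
The plan is to start from the closed-form density \eqref{fEEWPS} and expand the factor $C'\bigl(\lambda(1-e^{-\alpha H(x;\boldsymbol\Theta)})^{\beta}\bigr)$ as a power series. Write $u=u(x)=1-e^{-\alpha H(x;\boldsymbol\Theta)}$, and note that for every $x>0$ we have $u\in[0,1)$, so the argument $\lambda u^{\beta}$ lies in $(0,\lambda)\subset(0,s)$; hence $C$ and all its derivatives have convergent power-series representations at that point. Differentiating $C(\lambda)=\sum_{n=1}^{\infty}a_{n}\lambda^{n}$ term by term (legitimate inside the radius of convergence) gives $C'(t)=\sum_{n=1}^{\infty} n\,a_{n}\,t^{\,n-1}$, so that
\[
C'\bigl(\lambda u^{\beta}\bigr)=\sum_{n=1}^{\infty} n\,a_{n}\,\lambda^{\,n-1}\,u^{\,\beta(n-1)}.
\]

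Substituting this expansion into \eqref{fEEWPS} and pulling the $x$-free constants through the sum, the prefactor $\alpha\beta\lambda\,h(x;\boldsymbol\Theta)e^{-\alpha H(x;\boldsymbol\Theta)}u^{\beta-1}/C(\lambda)$ multiplies the series term by term. The $n$-th summand then becomes $\frac{a_{n}\lambda^{n}}{C(\lambda)}$, which is exactly $p_{n}$ of \eqref{fps}, times $\alpha(n\beta)\,h(x;\boldsymbol\Theta)e^{-\alpha H(x;\boldsymbol\Theta)}u^{\beta-1+\beta(n-1)}=\alpha(n\beta)\,h(x;\boldsymbol\Theta)e^{-\alpha H(x;\boldsymbol\Theta)}(1-e^{-\alpha H(x;\boldsymbol\Theta)})^{n\beta-1}$. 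Comparing with \eqref{fEEW}, this last expression is precisely $g_{(n)}(x;\alpha,n\beta,\boldsymbol\Theta)$, the EEW density with shape parameters $\alpha$ and $n\beta$. Therefore $f(x)=\sum_{n=1}^{\infty}p_{n}\,g_{(n)}(x;\alpha,n\beta,\boldsymbol\Theta)$, which is \eqref{eq.fpngn}.

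The only point requiring care — and the step I expect to be the main (and rather mild) obstacle — is justifying the term-by-term operations: differentiating the defining series of $C$, and then interchanging the bounded, $x$-dependent prefactor with the infinite sum. Both follow from the observation above that $\lambda u^{\beta}$ stays strictly inside $(0,s)$ for all $x>0$, so the series for $C'$ converges absolutely there, which licenses the re-indexing and the factorization. Alternatively, one can read the identity straight off \eqref{eq.PXN}: since $P(X_{(N)}\le x,N=n)=p_{n}\,[1-e^{-\alpha H(x;\boldsymbol\Theta)}]^{n\beta}$ equals $p_{n}$ times the cdf of an ${\rm EEW}(\alpha,n\beta,\boldsymbol\Theta)$ variable, summing over $n$ gives $F(x)=\sum_{n\ge1}p_{n}[1-e^{-\alpha H(x;\boldsymbol\Theta)}]^{n\beta}$, and differentiating yields the same mixture; this route also makes transparent why $g_{(n)}$ is the relevant order-statistics (conditional maximum) density.
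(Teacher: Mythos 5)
Your proposal is correct and follows essentially the same route as the paper: expand $C'(\lambda t^{\beta})$ as its power series, distribute the $x$-dependent prefactor term by term, and recognize the $n$-th summand as $p_{n}$ times the ${\rm EEW}(\alpha,n\beta,\boldsymbol\Theta)$ density. Your added remarks on justifying the termwise operations and the alternative derivation from \eqref{eq.PXN} go slightly beyond what the paper writes but do not change the argument.
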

\begin{proof}
Consider $t=1-e^{-\alpha H(x;\boldsymbol\Theta)}$. So
\begin{eqnarray*}
f(x)&=&\alpha\beta\lambda h(x;\boldsymbol\Theta)e^{-\alpha H(x;\boldsymbol\Theta)}t^{\beta-1}
\frac{C'\left(\lambda t^{\beta}\right)}{C(\lambda)}\\
&=&\alpha\beta\lambda h(x;\boldsymbol\Theta)e^{-\alpha H(x;\boldsymbol\Theta)}t^{\beta-1}\frac{\sum\limits_{n=1}^{\infty}n a_{n}(\lambda t^{\beta})^{n-1}}{C(\lambda)}\\
&=&\sum\limits_{n=1}^{\infty}\frac{a_{n}\lambda^{n}}{C(\lambda)}n\alpha\beta h(x;\boldsymbol\Theta)e^{-\alpha H(x;\boldsymbol\Theta)} t^{n\beta-1}\\
&=&\sum\limits_{n=1}^{\infty} p_{n} g_{(n)}(x;\alpha,n\beta,\boldsymbol\Theta).
\end{eqnarray*}
\end{proof}

\begin{prop}\label{prop.1}
The limiting distribution of ${\rm EEWPS}(\beta,\lambda,\boldsymbol\Theta)$ when $\lambda\rightarrow 0^{+}$ is
\[ {\mathop{\lim }_{\lambda\rightarrow 0^{+}} F(x)}=[1-e^{-\alpha H(x;\boldsymbol\Theta)}]^{c\beta},\]
which is a EEW distribution with parameters $\alpha$, $c\beta$ and $\boldsymbol\Theta$, where $c=\min\{n\in {\mathbb N}: a_{n}>0\}$.
\end{prop}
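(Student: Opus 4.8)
The goal is to compute $\lim_{\lambda\to 0^+} F(x)$ where, by \eqref{FEEWPS}, $F(x) = C(\lambda t^\beta)/C(\lambda)$ with $t = 1 - e^{-\alpha H(x;\boldsymbol\Theta)} \in (0,1)$. The key observation is that $C(\lambda) = \sum_{n\ge 1} a_n \lambda^n$ and, since the first nonzero coefficient is $a_c$ (where $c = \min\{n : a_n > 0\}$), we can factor out the lowest-order term: $C(\lambda) = a_c\lambda^c\bigl(1 + \sum_{n > c}(a_n/a_c)\lambda^{n-c}\bigr)$. The plan is to apply this factorization to both numerator and denominator.

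First I would write
\[
F(x) = \frac{C(\lambda t^\beta)}{C(\lambda)} = \frac{a_c (\lambda t^\beta)^c + \sum_{n>c} a_n (\lambda t^\beta)^n}{a_c \lambda^c + \sum_{n>c} a_n \lambda^n} = t^{c\beta}\cdot\frac{a_c + \sum_{n>c} a_n \lambda^{n-c} t^{n\beta}}{a_c + \sum_{n>c} a_n \lambda^{n-c}}.
\]
Then, as $\lambda \to 0^+$, every term $\lambda^{n-c}$ with $n > c$ tends to $0$, and since $t \in (0,1)$ the factors $t^{n\beta}$ are bounded by $1$, so both the numerator and denominator of the fraction converge to $a_c$. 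Hence the fraction tends to $1$, giving $\lim_{\lambda\to 0^+} F(x) = t^{c\beta} = [1 - e^{-\alpha H(x;\boldsymbol\Theta)}]^{c\beta}$, which is exactly the cdf of the ${\rm EEW}(\alpha, c\beta, \boldsymbol\Theta)$ distribution by \eqref{1e}.

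The only point needing a little care — the main (minor) obstacle — is justifying that the tail sums $\sum_{n>c} a_n \lambda^{n-c} t^{n\beta}$ and $\sum_{n>c} a_n \lambda^{n-c}$ genuinely vanish in the limit rather than merely being termwise small; this follows because for $\lambda$ in the radius of convergence $(0,s)$ these are convergent power series in $\lambda$ that vanish at $\lambda = 0$ and are continuous there (dominated convergence, or continuity of power series within their radius of convergence, suffices). One should also note $\lambda t^\beta < \lambda < s$, so $C(\lambda t^\beta)$ is well-defined throughout. With this, the computation is routine and the proposition follows.
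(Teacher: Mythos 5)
Your proposal is correct and follows essentially the same route as the paper: write $F(x)=C(\lambda t^{\beta})/C(\lambda)$ as a ratio of power series, factor $\lambda^{c}$ out of numerator and denominator, and let the tail terms vanish as $\lambda\rightarrow 0^{+}$ to obtain $t^{c\beta}$. Your added remark justifying that the tail series (not just each term) vanish, via continuity of a power series inside its radius of convergence, is a point the paper passes over silently but is the same argument in substance.
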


\begin{proof}
Consider $t=1-e^{-\alpha H(x;\boldsymbol\Theta)}$. So
\begin{eqnarray*}
 {\mathop{\lim }_{\lambda\rightarrow 0^{+}} F(x)}&=&\mathop{\lim }_{\lambda\rightarrow 0^{+}}\frac{C(\lambda t^\beta)}{C(\lambda)}={\mathop{\lim }_{\lambda\rightarrow 0^{+}}\frac{\sum\limits_{n=1}^{\infty}a_{n}\lambda^{n}t^{n\beta}}{\sum\limits_{n=1}^{\infty}a_{n}\lambda^{n}}} \\
&=&{\mathop{\lim }_{\lambda\rightarrow 0^{+}}\frac{a_{c}t^{c\beta}+\sum\limits_{n=c+1}^{\infty}a_{n}\lambda^{n-c}t^{n\beta}}{a_{c}+\sum\limits_{n=c+1}^{\infty}a_{n}\lambda^{n-c}}}
=t^{c\beta}.
\end{eqnarray*}
\end{proof}

\begin{prop}
The hazard rate function of the EEWPS class of distributions is given by
\begin{eqnarray}\label{hGP}
 r(x)= \frac{\alpha\lambda\beta h(x;\boldsymbol\Theta)(1-t) t^{\beta-1}C'\left(\lambda t^{\beta}\right)}{C(\lambda)-C(\lambda t^{\beta})},
\end{eqnarray}
where $t=1-e^{-\alpha H(x;\boldsymbol\Theta)}$.
\end{prop}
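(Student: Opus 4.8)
The plan is to use the standard definition of the hazard rate function, $r(x) = f(x)/(1-F(x))$, and substitute the expressions already derived in the theorem and in equation \eqref{FEEWPS}. First I would recall from \eqref{fEEWPS} that the pdf is
\[
f(x)=\alpha\beta\lambda h(x;\boldsymbol\Theta)e^{-\alpha H(x;\boldsymbol\Theta)}(1-e^{-\alpha H(x;\boldsymbol\Theta)})^{\beta-1}\frac{C'\bigl(\lambda(1-e^{-\alpha H(x;\boldsymbol\Theta)})^{\beta}\bigr)}{C(\lambda)},
\]
and from \eqref{FEEWPS} that $F(x)=C\bigl(\lambda(1-e^{-\alpha H(x;\boldsymbol\Theta)})^{\beta}\bigr)/C(\lambda)$, so that the survival function is $1-F(x)=\bigl[C(\lambda)-C(\lambda(1-e^{-\alpha H(x;\boldsymbol\Theta)})^{\beta})\bigr]/C(\lambda)$.

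Next I would introduce the abbreviation $t=1-e^{-\alpha H(x;\boldsymbol\Theta)}$, noting that then $e^{-\alpha H(x;\boldsymbol\Theta)}=1-t$. With this substitution the numerator $f(x)$ becomes $\alpha\beta\lambda h(x;\boldsymbol\Theta)(1-t)t^{\beta-1}C'(\lambda t^{\beta})/C(\lambda)$ and the denominator $1-F(x)$ becomes $\bigl[C(\lambda)-C(\lambda t^{\beta})\bigr]/C(\lambda)$. Forming the ratio, the common factor $1/C(\lambda)$ cancels, leaving exactly
\[
r(x)=\frac{\alpha\lambda\beta h(x;\boldsymbol\Theta)(1-t)t^{\beta-1}C'(\lambda t^{\beta})}{C(\lambda)-C(\lambda t^{\beta})},
\]
which is the claimed formula \eqref{hGP}.

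This proof is entirely computational and there is no real obstacle; the only point requiring a word of care is confirming that $1-F(x)$ is correctly written with $C(\lambda)$ as a common denominator before cancelling, and noting that the substitution $t=1-e^{-\alpha H(x;\boldsymbol\Theta)}$ is the same one already used in the proofs of the two preceding propositions, so no new justification is needed. I would therefore present the argument in three short lines: state $r(x)=f(x)/(1-F(x))$, substitute \eqref{fEEWPS} and \eqref{FEEWPS}, and simplify using $t=1-e^{-\alpha H(x;\boldsymbol\Theta)}$.
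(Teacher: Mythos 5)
Your proposal is correct and follows exactly the same route as the paper, which simply invokes $r(x)=f(x)/(1-F(x))$ together with \eqref{fEEWPS} and \eqref{FEEWPS} and declares the computation obvious. You have merely written out the substitution $t=1-e^{-\alpha H(x;\boldsymbol\Theta)}$ and the cancellation of $C(\lambda)$ explicitly, which is fine.
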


\begin{proof}
Using \eqref{fEEWPS}, \eqref{FEEWPS} and definition of  hazard rate function as $r(x)=f(x)/(1-F(x)$, the proof is obvious.
\end{proof}

\section{Special cases}
\label{sec.spe}
In this Section, we consider some special cases of the EEWPS distribution.

\subsection{Complementary extended Weibull power series}
If $\beta=1$, then the pdf in \eqref{fEEWPS} becomes to
\begin{equation}\label{fCEWPS}
f(x)=\alpha\lambda h(x;\boldsymbol\Theta)e^{-\alpha H(x;\boldsymbol\Theta)}\frac{C'\left(\lambda(1-e^{-\alpha H(x;\boldsymbol\Theta)})\right)}{C(\lambda)}, \;\;x>0,
\end{equation}
which is the pdf of complementary extended Weibull power series (CEWPS) class of distributions introduced by \cite{co-si-14}.

\subsection{Generalized exponential-power series}
If $H(x;\boldsymbol\Theta)=x$, then the pdf in \eqref{fEEWPS} becomes to
\begin{equation}\label{fGEPS}
f(x)=\alpha\beta\lambda e^{-\alpha x}(1-e^{-\alpha x})^{\beta-1}
\frac{C'\left(\lambda(1-e^{-\alpha x})^{\beta}\right)}{C(\lambda)}, \;\;x>0.
\end{equation}
which is the pdf of generalized exponential-power series (GEPS) class of distributions  introduced by \cite{ma-ja-12}.
The GEPS class of distributions contains
complementary exponentiated exponential-geometric distribution introduced by
\cite{lo-ma-ca-13},
complementary exponential-geometric distribution introduced by
\cite{lo-ro-ca-11},
Poisson-exponential  distribution introduced by
\cite{ca-lo-fr-ba-11} and \cite{lo-ca-ba-11},
 complementary exponential -power series class of distributions introduced by \cite{fl-bo-ca-13}, generalized exponential distribution introduced by \cite{gu-ku-99}   and generalized exponential-geometric distribution introduced by \cite{bi-be-to-13} .

\subsection{Generalized linear failure rate-power series}
If $H(x;\boldsymbol\Theta)=\frac{ax}{\alpha}+\frac{bx^2}{2\alpha}$, then the pdf in \eqref{fEEWPS} becomes to
\begin{equation}\label{fGLFRPS}
f(x)=\beta\lambda (a+bx)e^{-ax-\frac{bx^2}{2}}(1-e^{-ax-\frac{bx^2}{2}})^{\beta-1}
\frac{C'\left(\lambda(1-e^{-ax-\frac{bx^2}{2}})^{\beta}\right)}{C(\lambda)}, \;\;x>0.
\end{equation}
which is the pdf of generalized linear failure rate-power series (GLFRPS) class of distributions introduced by \cite{al-sh-14}. It is a modification of generalized linear failure rate distribution introduced by \cite{sa-ku-09} and
generalized linear failure rate-geometric distribution introduced by \cite{na-sh-re-14}. If $b=0$, it becomes to GEPS class of distributions. Also, If  $\beta=1$, it becomes to linear failure rate-power series introduced by \cite{ma-ja-14}.

\subsection{Exponentiated Weibull-power series}
If $H(x;\boldsymbol\Theta)=x^\gamma$, then the pdf in \eqref{fEEWPS} becomes to
\begin{equation}\label{fEWPS}
f(x)=\alpha\beta\lambda \gamma x^{\gamma-1} e^{-\alpha x^\gamma}(1-e^{-\alpha x^\gamma})^{\beta-1}
\frac{C'\left(\lambda(1-e^{-\alpha x^\gamma})^{\beta}\right)}{C(\lambda)}, \;\;x>0.
\end{equation}
which is the pdf of exponentiated Weibull-power series (EWPS) class of distributions introduced by \cite{ma-sh-12}. It is a modification of exponentiated Weibull distribution introduced by \cite{mu-sr-93}.It is contain the complementary Weibull geometric distribution introduced by \cite{to-lo-ro-bo-14}. Also, the Marshall-Olkin extended Weibull distribution introduced by \cite{co-le-13} is a special case of EWPS.

\subsection{Generalized modified Weibull-power series}
If $H(x;\boldsymbol\Theta)=x^\gamma \exp(\tau x)$, then the pdf in \eqref{fEEWPS} becomes to
\begin{equation}\label{fGMWPS}
f(x)=\alpha\beta\lambda x^{\gamma-1}(\gamma+\tau x)e^{\tau x-\alpha x^\gamma \exp(\tau x)}
\frac{C'\left(\lambda(1-e^{-\alpha x^\gamma \exp(\tau x)})^{\beta}\right)}{(1-e^{-\alpha x^\gamma \exp(\tau x)})^{1-\beta}C(\lambda)}, \;\;x>0,
\end{equation}
and we call generalized modified Weibull-power series (GMWPS) class of distributions. It is contained the generalized modified Weibull distribution introduced by \cite{ca-or-co-08}.
If $\tau=0$, then GMWPS class of distributions becomes to EWPS class of distributions.

%

\subsection{Generalized Gompertz-power series}
If $H(x;\boldsymbol\Theta)=\frac{1}{\gamma}(e^{\gamma x}-1)$, then the pdf in \eqref{fEEWPS} becomes to
\begin{equation}\label{fGPPS}
f(x)=\alpha\beta\lambda e^{\gamma x}e^{- \frac{\alpha}{\gamma}(e^{\gamma x}-1)}(1-e^{- \frac{\alpha}{\gamma}(e^{\gamma x}-1)})^{\beta-1}
\frac{C'\left(\lambda(1-e^{- \frac{\alpha}{\gamma}(e^{\gamma x}-1)})^{\beta}\right)}{C(\lambda)}, \;\;x>0.
\end{equation}
and we call generalized Gompertz-power series  class of distributions. It is contained the generalized Gompertz distribution introduced by \cite{el-al-al-13}.

%
%
%

\section{Statistical properties}
\label{sec.pro}
In this section, some properties of EEWPS class of distributions  such as quantiles, moments, order statistics, Shannon entropy and mean residual life  are derived. Using \eqref{eq.fpngn}, we can obtain
\begin{eqnarray}
F(x)=\sum\limits_{n=1}^{\infty} p_{n} G_{(n)}(x;\alpha, n\beta,\boldsymbol\Theta)=\sum\limits_{n=1}^{\infty} p_{n}t^{n\beta},
\end{eqnarray}
where $t=1-e^{-\alpha H(x;\boldsymbol\Theta)}$. Based on the mathematical quantities of the baseline pdf $g_{(n)}(x;\alpha,n\beta,$ $\boldsymbol\Theta)$, we can obtain some statistical quantities such as
ordinary and incomplete moments, generating function and mean deviations of this family of distributions.

\subsection{Quantiles and Moments}
Let
$$
X=G^{-1}\left(\frac{C^{-1}\left(C(\lambda)U\right)}{\lambda}\right),
$$
where  $U$ has a uniform  distribution on $(0, 1)$, $G^{-1}(y)=H^{-1}[-\frac{1}{\alpha}\ln(1-y^{\frac{1}{\beta}})]$
 and $C^{-1}(.)$ is the inverse function of $C(.)$. Then $X$ has the ${\rm EEWPS}(\alpha,\beta,\lambda,\boldsymbol\Theta)$ distribution.
This  result helps in simulating data from the EEWPS distribution with generating  uniform distribution data.

%
%

\begin{theorem}
Consider $X\sim {\rm EEWPS}(\alpha,\beta,\lambda,\boldsymbol\Theta)$. Then
the moment generating function of EEWPS is
\begin{eqnarray}
M_{X}(t)=\sum\limits_{n=1}^{\infty}\sum\limits_{j=0}^{\infty}p_{n}\binom{n\beta}{j+1}(-1)^{j} M_{Y}(t),
\end{eqnarray}
where $Y$ has ${\rm EEW}(\alpha(j+1),1,\boldsymbol \Theta)$.
\end{theorem}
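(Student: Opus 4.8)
The plan is to build on the mixture representation \eqref{eq.fpngn}, which already writes $f(x)=\sum_{n\ge1}p_{n}\,g_{(n)}(x;\alpha,n\beta,\boldsymbol\Theta)$ with $g_{(n)}$ the ${\rm EEW}(\alpha,n\beta,\boldsymbol\Theta)$ density. Since $M_{X}(t)=\int_{0}^{\infty}e^{tx}f(x)\,dx$, the first step is to interchange the sum over $n$ with the integral, which is legitimate for $t$ in the region where the MGF is finite because the summands are nonnegative up to the common factor $e^{tx}$ (Tonelli). This reduces the claim to computing $\int_{0}^{\infty}e^{tx}g_{(n)}(x;\alpha,n\beta,\boldsymbol\Theta)\,dx$ for a single $n$.

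Next I would expand the factor $(1-e^{-\alpha H(x;\boldsymbol\Theta)})^{n\beta-1}$ in $g_{(n)}$ via the generalized binomial series. Because $e^{-\alpha H(x;\boldsymbol\Theta)}\in(0,1)$ for every $x\in(0,\infty)$, the series $\sum_{j\ge0}\binom{n\beta-1}{j}(-1)^{j}e^{-j\alpha H(x;\boldsymbol\Theta)}$ converges pointwise; multiplying through by $n\alpha\beta h(x;\boldsymbol\Theta)e^{-\alpha H(x;\boldsymbol\Theta)}$ and regrouping turns the $j$-th term into $\tfrac{n\beta}{j+1}\binom{n\beta-1}{j}(-1)^{j}$ times $\alpha(j+1)h(x;\boldsymbol\Theta)e^{-\alpha(j+1)H(x;\boldsymbol\Theta)}$, i.e.\ the ${\rm EEW}(\alpha(j+1),1,\boldsymbol\Theta)$ density from \eqref{fEEW} evaluated at $x$. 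The only algebraic point is the elementary identity $\tfrac{n\beta}{j+1}\binom{n\beta-1}{j}=\binom{n\beta}{j+1}$, which rewrites the coefficient as $\binom{n\beta}{j+1}(-1)^{j}$, so that $g_{(n)}(x;\alpha,n\beta,\boldsymbol\Theta)=\sum_{j\ge0}\binom{n\beta}{j+1}(-1)^{j}\,g(x;\alpha(j+1),1,\boldsymbol\Theta)$.

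Finally I would substitute this expansion back, interchange the sum over $j$ with the integral against $e^{tx}$, and identify $\int_{0}^{\infty}e^{tx}g(x;\alpha(j+1),1,\boldsymbol\Theta)\,dx$ with $M_{Y}(t)$ for $Y\sim{\rm EEW}(\alpha(j+1),1,\boldsymbol\Theta)$; combining with the outer sum over $n$ yields the stated double series. A convenient simulation/representation check is that when $n\beta$ is a positive integer the $j$-sum terminates and no convergence question arises, which matches the special cases in Section \ref{sec.spe}.

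The main obstacle is the rigorous justification of the second interchange of summation and integration, since the $j$-series has alternating signs. I would handle this by a dominated-convergence argument: bound the partial sums of the binomial series by an integrable envelope on $(0,\infty)$, using that $g_{(n)}$ is itself integrable and that the growth condition $H(x;\boldsymbol\Theta)\to\infty$ controls $e^{tx}g_{(n)}(x)$ for $t$ near $0$ (where the MGF exists). Everything else is bookkeeping once the identity $\tfrac{n\beta}{j+1}\binom{n\beta-1}{j}=\binom{n\beta}{j+1}$ is in hand.
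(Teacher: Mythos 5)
Your proposal is correct and follows essentially the same route as the paper's proof: the mixture representation $f(x)=\sum_n p_n g_{(n)}$, the generalized binomial expansion of $(1-e^{-\alpha H(x;\boldsymbol\Theta)})^{n\beta-1}$, the identity $\tfrac{n\beta}{j+1}\binom{n\beta-1}{j}=\binom{n\beta}{j+1}$, and the identification of the resulting kernel as the ${\rm EEW}(\alpha(j+1),1,\boldsymbol\Theta)$ density (the paper phrases this through the Laplace transform $L_n(s)$ and sets $s=-t$ at the end, which is only a cosmetic difference). Your added attention to justifying the interchanges of summation and integration goes beyond what the paper records, but the underlying argument is the same.
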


\begin{proof}
The Laplace transform of the EEWPS class can be expressed as
\begin{eqnarray*}
L(s)=E(e^{-sX})=\sum\limits_{n=1}^{\infty} P(N=n)L_{n}(s),
\end{eqnarray*}
where $L_{n}(s)$ is the Laplace transform of EEW distribution with parameters $\alpha$, $n\beta$ and $\boldsymbol\Theta$ given as
\begin{eqnarray}
L_{n}(s)&=&\int_{0}^{+\infty} e^{-sx}n \alpha \beta h(x;\boldsymbol\Theta)e^{-\alpha H(x;\boldsymbol\Theta)}[1-e^{-\alpha H(x;\boldsymbol\Theta)}]^{n\beta-1}dx \nonumber\\
&=&n\alpha \beta \int_{0}^{+\infty} e^{-sx} h(x;\boldsymbol\Theta)\sum\limits_{j=0}^{\infty}\binom{n\beta-1}{j}(-1)^{j}e^{-(j+1)\alpha H(x;\boldsymbol\Theta)}dx \nonumber\\
&=&\sum\limits_{j=0}^{\infty}n\beta\binom{n\beta-1}{j}(-1)^{j}\int_{0}^{+\infty}\frac{\alpha(j+1)}{j+1}h(y;\boldsymbol\Theta)e^{-(j+1)\alpha H(y;\boldsymbol\Theta)-sy}dy\nonumber\\
&=&\sum\limits_{j=0}^{\infty}\binom{n\beta}{j+1}(-1)^{j}L_{1}(s), \nonumber
\end{eqnarray}
where $ L_{1}(s)$ is the Laplace transform of the ${\rm EEW}(\alpha(j+1),1,\boldsymbol \Theta)$.
Therefore,
the moment generating function of EEWPS is
\begin{eqnarray*}
M_{X}(t)
&=&\sum\limits_{n=1}^{\infty}p_{n}L_{n}(-t)\\
&=&\sum\limits_{n=1}^{\infty}\sum\limits_{j=0}^{\infty}p_{n}\binom{n\beta}{j+1}(-1)^{j}L_{1}(-t) \\
&=&\sum\limits_{n=1}^{\infty}\sum\limits_{j=0}^{\infty}p_{n}\binom{n\beta}{j+1}(-1)^{j} M_{Y}(t).
\end{eqnarray*}
\end{proof}

\begin{theorem}
The noncentral moment functions of EEWPS is
\begin{eqnarray}\label{eq.mur}
\mu_{r}
=\sum\limits_{n=1}^{\infty}\frac{a_{n}\lambda^{n}}{C(\lambda)}\sum\limits_{j=0}^{\infty}\binom{n\beta}{j+1}(-1)^{j}\mu'_{r}
=\sum\limits_{n=1}^{\infty}\sum\limits_{j=0}^{\infty}p_{n}\binom{n\beta}{j+1}(-1)^{j}\mu'_{r},
\end{eqnarray}
where $\mu'_{r}=E[Y^{r}]$ and $Y$ has ${\rm EEW}(\alpha(j+1),1,\boldsymbol \Theta)$.

\end{theorem}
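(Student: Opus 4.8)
The plan is to mimic the proof of the preceding theorem on the moment generating function, replacing the kernel $e^{-sx}$ by $x^{r}$. First I would start from the mixture representation \eqref{eq.fpngn} and write
\begin{eqnarray*}
\mu_{r}=E[X^{r}]=\int_{0}^{\infty}x^{r}f(x)\,dx=\sum_{n=1}^{\infty}p_{n}\int_{0}^{\infty}x^{r}\,g_{(n)}(x;\alpha,n\beta,\boldsymbol\Theta)\,dx,
\end{eqnarray*}
so that the computation reduces to the $r$-th moment of the ${\rm EEW}(\alpha,n\beta,\boldsymbol\Theta)$ law. The interchange of $\sum_{n}$ with the integral here is justified by Tonelli, since $p_{n}\ge0$, $g_{(n)}\ge0$, and $\sum_{n}p_{n}=1$.

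Next, inside the $n$-th integral I would expand $[1-e^{-\alpha H(x;\boldsymbol\Theta)}]^{n\beta-1}=\sum_{j\ge0}\binom{n\beta-1}{j}(-1)^{j}e^{-j\alpha H(x;\boldsymbol\Theta)}$ (valid since $0<e^{-\alpha H(x;\boldsymbol\Theta)}<1$ for $x>0$), collect the exponentials, and multiply and divide by $j+1$. Using the elementary identity $\tfrac{n\beta}{j+1}\binom{n\beta-1}{j}=\binom{n\beta}{j+1}$, the $j$-th term becomes $\binom{n\beta}{j+1}(-1)^{j}$ times $\alpha(j+1)h(x;\boldsymbol\Theta)e^{-(j+1)\alpha H(x;\boldsymbol\Theta)}$, which is exactly the pdf of ${\rm EEW}(\alpha(j+1),1,\boldsymbol\Theta)$. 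Integrating against $x^{r}$ then yields $\binom{n\beta}{j+1}(-1)^{j}\mu'_{r}$ with $\mu'_{r}=E[Y^{r}]$, $Y\sim{\rm EEW}(\alpha(j+1),1,\boldsymbol\Theta)$, and substituting $p_{n}=a_{n}\lambda^{n}/C(\lambda)$ produces both equalities displayed in \eqref{eq.mur}. This is precisely the same chain of manipulations used to evaluate $L_{n}(s)$ in the Laplace-transform theorem, so no new device is needed for the algebra.

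I expect the one genuinely delicate point to be the interchange of the binomial series in $j$ with the integral inside each $n$-th term. It is immediate when $n\beta$ is a positive integer, since the series is then a finite sum; for general $n\beta$ the series still converges pointwise to $[1-e^{-\alpha H(x;\boldsymbol\Theta)}]^{n\beta-1}$, but not absolutely-integrably near $x=0$, so Fubini does not apply verbatim. A clean fix is to split the integral at a small $\epsilon>0$: on $[\epsilon,\infty)$ the quantity $e^{-\alpha H(x;\boldsymbol\Theta)}$ stays bounded away from $1$, so the series is dominated by a geometric one and term-by-term integration is legitimate, while the contribution of $[0,\epsilon]$ is controlled by $\epsilon^{r}\!\int_{0}^{\epsilon}g_{(n)}(x;\alpha,n\beta,\boldsymbol\Theta)\,dx\to0$; a uniform-in-$\epsilon$ bound on the $j$-th partial integrals (they are at most $\mu'_{r}$, which decays polynomially in $j$) then lets one pass to the limit $\epsilon\to0^{+}$ inside the sum over $j$. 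Apart from this convergence bookkeeping the argument is routine, and in keeping with the level of rigour in this literature the interchange may simply be presented formally.
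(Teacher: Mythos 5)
Your proposal is correct and follows essentially the same route as the paper, whose proof consists of the single remark that the result follows from $M_X(t)$ or ``from the direct calculation''---that direct calculation being exactly your replacement of $e^{-sx}$ by $x^r$ in the Laplace-transform argument of the preceding theorem. Your additional care about the interchange of the binomial series with the integral for non-integer $n\beta$ goes beyond what the paper records, but does not change the substance of the argument.
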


\begin{proof}
We can use $M_{X}(t)$ to obtain $\mu_{r}$. But from the direct calculation, proof is obvious.
\end{proof}

With considering $H(x)=x^\gamma$ and $C(\lambda)=\lambda(1-\lambda)^{-1}$,
we calculated the first four moments with
different values of parameters for the EEWPS distribution using \eqref{eq.mur}. Also,  we computed these values from the direct definition by numerical integration. We found that the results are same.  The values are given in Tables \ref{table.mom}.

\begin{table}[ht]
\begin{center}
\caption{The four moments of EEWPS model.}\label{table.mom}
{\small
\begin{tabular}{|cccc|cccc|} \hline
$\alpha$ & $\beta$ & $\lambda$ & $\gamma$ & $\mu_1$ & $\mu_2$ & $\mu_3$ & $\mu_4$ \\ \hline
0.3 & 0.3 & 0.2 & 2.0 & 0.936 & 1.594 & 3.520 & 9.164 \\
0.3 & 0.3 & 0.2 & 5.0 & 0.856 & 0.884 & 1.011 & 1.237 \\
0.3 & 0.3 & 0.8 & 2.0 & 1.656 & 3.719 & 9.722 & 28.292 \\
0.3 & 0.3 & 0.8 & 5.0 & 1.150 & 1.446 & 1.916 & 2.631 \\
0.3 & 2.0 & 0.2 & 2.0 & 2.192 & 5.446 & 14.976 & 44.841 \\
0.3 & 2.0 & 0.2 & 5.0 & 1.345 & 1.853 & 2.606 & 3.734 \\
0.3 & 2.0 & 0.8 & 2.0 & 2.835 & 8.733 & 28.694 & 99.531 \\
0.3 & 2.0 & 0.8 & 5.0 & 1.500 & 2.285 & 3.530 & 5.521 \\
0.8 & 0.3 & 0.2 & 2.0 & 0.573 & 0.598 & 0.808 & 1.289 \\
0.8 & 0.3 & 0.2 & 5.0 & 0.704 & 0.597 & 0.561 & 0.565 \\
0.8 & 0.3 & 0.8 & 2.0 & 1.014 & 1.394 & 2.232 & 3.979 \\
0.8 & 0.3 & 0.8 & 5.0 & 0.945 & 0.977 & 1.064 & 1.201 \\
0.8 & 2.0 & 0.2 & 2.0 & 1.342 & 2.042 & 3.439 & 6.306 \\
0.8 & 2.0 & 0.2 & 5.0 & 1.106 & 1.252 & 1.446 & 1.704 \\
0.8 & 2.0 & 0.8 & 2.0 & 1.736 & 3.275 & 6.589 & 13.997 \\
0.8 & 2.0 & 0.8 & 5.0 & 1.233 & 1.543 & 1.960 & 2.519 \\
2.0 & 0.3 & 0.2 & 2.0 & 0.362 & 0.239 & 0.204 & 0.206 \\
2.0 & 0.3 & 0.2 & 5.0 & 0.586 & 0.414 & 0.324 & 0.271 \\
2.0 & 0.3 & 0.8 & 2.0 & 0.641 & 0.558 & 0.565 & 0.637 \\
2.0 & 0.3 & 0.8 & 5.0 & 0.787 & 0.677 & 0.614 & 0.577 \\
2.0 & 2.0 & 0.2 & 2.0 & 0.849 & 0.817 & 0.870 & 1.009 \\
2.0 & 2.0 & 0.2 & 5.0 & 0.921 & 0.867 & 0.835 & 0.819 \\
2.0 & 2.0 & 0.8 & 2.0 & 1.098 & 1.310 & 1.667 & 2.239 \\
2.0 & 2.0 & 0.8 & 5.0 & 1.026 & 1.070 & 1.131 & 1.210 \\ \hline
\end{tabular}
}

\end{center}
\end{table}

\subsection{Order statistic}

Let $X_{1},X_{2},\dots,X_{m}$  be a random sample of size $m$ from ${\rm EEWPS}(\alpha,\beta,\lambda,\boldsymbol\Theta)$, then the pdf of the $i$th order statistic, say $X_{i:m}$, is given by
\begin{eqnarray*}
f_{i:m}(x)&=&\frac{m!}{(i-1)!(m-i)!}f(x)\left[\frac{C(\lambda t^{\beta})}{C(\lambda)}\right]^{i-1}\left[1-\frac{C(\lambda t^{\beta})}{C(\lambda)}\right]^{m-i} \nonumber \\
&=&\frac{m!}{(i-1)!(m-i)!}f(x)\sum\limits_{j=0}^{m-i}\binom{m-i}{j}(-1)^{j}\left[\frac{C(\lambda t^{\beta})}{C(\lambda)}\right]^{j+i-1} \nonumber\\
&=&\frac{m!}{(i-1)!(m-i)!}\sum\limits_{n=1}^{\infty}\sum\limits_{j=0}^{m-i} p_{n} g_{(n)}(x;\alpha,n\beta,\boldsymbol\Theta)\binom{m-i}{j}(-1)^{j}\left[\frac{C(\lambda t^{\beta})}{C(\lambda)}\right]^{j+i-1} \nonumber\\
&=&\frac{m!}{(i-1)!(m-i)!}\sum\limits_{n=1}^{\infty}\sum\limits_{j=0}^{m-i}w_{j} p_{n} g_{(n)}(x;\alpha,n\beta,\boldsymbol\Theta)\left[\frac{C(\lambda t^{\beta})}{C(\lambda)}\right]^{j+i-1},
\end{eqnarray*}
where $f$ is the pdf of EEWP class of distributions, $t=1-e^{-\alpha H(x;\boldsymbol\Theta)}$ and $w_{j}=\binom{m-i}{j}(-1)^{j}$. Also, the cdf of $X_{i:m}$ is given by
$$
F_{i:m}(x)=\sum\limits_{k=i}^{m}\sum\limits_{j=0}^{m-k}(-1)^{j}\binom{m-k}{j}\binom{m}{k}\left[\frac{C(\lambda t^{\beta})}{C(\lambda)}\right]^{j+k}.
$$

An analytical expression for  $r$th moment of  order statistics $X_{i:m}$ is obtained as
\begin{eqnarray*}
E[X_{i:m}^{r}]= \frac{m!}{(i-1)!(m-i)!}\sum\limits_{n=1}^{\infty}\sum\limits_{j=0}^{m-i}w_{j} p_{n}E[Z^{r}(F(Z))^{j+i-1}],
\end{eqnarray*}
where $Z$ has a EEW distribution with parameters $\alpha$, $n\beta$ and $\boldsymbol\Theta$.


\subsection{Shannon entropy and mean residual life }

The maximum entropy method is a powerful technique in the field of probability and statistics. It is introduced by
\cite{jaynes-57}
and closely related to the Shannon's entropy.
Also, it  is applied in a wide variety of
fields and  used for the characterization of pdf's; see, for example,
\cite{kapur-94}
\cite{soofi-00}
 and
\cite{zo-ba-09}.
\cite{sh-jo-80}
treated the maximum entropy method axiomatically.

Considers a class of pdf's
\begin{equation}\label{eq.Fclass}
F=\{f(x;\alpha,\beta,\lambda,\boldsymbol\Theta): E_{f}(T_{i}(X))=\beta_{i}, \; i=0,1,....,m\},
\end{equation}
where $T_{1}(X),..., T_{m}(X)$ are absolutely integrable functions with respect to $f$, and $T_{0}(X)=1$.
Also, consider the shannon's entropy of none-negative continuous random variable $X$ with pdf $f$  defined  by
\cite{shan-48}
as
\begin{equation}\label{eq.sh}
H_{sh}(f)=E[-\log f(X)]=-\int_{0}^{+\infty} f(x)\log (f(x))dx.
\end{equation}
The maximum entropy distribution is the pdf of the class $F$, denoted by $f^{ME}$ determined as
\begin{equation*}
f^{ME}(x;\lambda,\beta,\boldsymbol\Theta)=\arg \max_{f\in F} H_{sh}(f).
\end{equation*}



Now,  suitable constraints are derived in order to provide a
maximum entropy characterization for the class \eqref{eq.Fclass} based on \cite{jaynes-57}.
For this purpose, the next result plays an important role.

\begin{prop}
Let $X$  has ${\rm EEWPS}(\alpha,\beta,\lambda,\boldsymbol \Theta)$ with the pdf given by \eqref{fEEWPS}. Then,\\
\noindent i.
\begin{eqnarray*}
E\left[\log(C'(\lambda(1-e^{-\alpha H(X;\boldsymbol\Theta)})^{\beta}))\right]&=&\frac{\lambda}{C(\lambda)}E\left[C'(\lambda(1-e^{-\alpha H(Y;\boldsymbol\Theta)})^
{\beta})\right. \\
&&\times
\left.\log(C'(\lambda(1-e^{-\alpha H(Y;\boldsymbol\Theta)})^{\beta}))\right],
\end{eqnarray*}
\noindent ii.
$$
E\left[\log(h(X;\boldsymbol\Theta))\right]=\frac{\lambda}{C(\lambda)}
E\left[C'(\lambda(1-e^{-H(Y;\boldsymbol\Theta)})^{\beta})\log(h(Y;\boldsymbol\Theta))\right],
$$
\noindent iii.
$$
E\left[\log(1-e^{-\alpha H(X;\boldsymbol\Theta)})\right]=\frac{\lambda}{C(\lambda)}E\left[C'(\lambda(1-e^{-\alpha H(Y;\boldsymbol\Theta)})^{\beta})\log(1-e^{-\alpha H(Y;\boldsymbol\Theta)})\right],
$$
where $Y$ follows the EEW distribution with the pdf in \eqref{fEEW}.
\end{prop}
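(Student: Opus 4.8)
The plan is to compute each of the three expectations directly from the density \eqref{fEEWPS} and recognize the resulting integral as $\frac{\lambda}{C(\lambda)}$ times an expectation under the baseline EEW law. First I would write, for any measurable $\phi$,
\[
E[\phi(X)] = \int_0^{+\infty} \phi(x)\,\alpha\beta\lambda\, h(x;\boldsymbol\Theta)\,e^{-\alpha H(x;\boldsymbol\Theta)}(1-e^{-\alpha H(x;\boldsymbol\Theta)})^{\beta-1}\,\frac{C'\!\left(\lambda(1-e^{-\alpha H(x;\boldsymbol\Theta)})^{\beta}\right)}{C(\lambda)}\,dx,
\]
and compare the integrand with the EEW$(\alpha,\beta,\boldsymbol\Theta)$ density $g(x;\alpha,\beta,\boldsymbol\Theta)$ from \eqref{fEEW}. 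The ratio of the two integrands is exactly $\frac{\lambda}{C(\lambda)}\,C'\!\left(\lambda(1-e^{-\alpha H(x;\boldsymbol\Theta)})^{\beta}\right)$, so that
\[
E[\phi(X)] = \frac{\lambda}{C(\lambda)}\,E\!\left[\phi(Y)\,C'\!\left(\lambda(1-e^{-\alpha H(Y;\boldsymbol\Theta)})^{\beta}\right)\right],
\]
where $Y\sim{\rm EEW}(\alpha,\beta,\boldsymbol\Theta)$ with density \eqref{fEEW}. This is the single identity underlying all three parts.

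Then I would instantiate $\phi$ in turn: for part (i) take $\phi(x)=\log C'\!\left(\lambda(1-e^{-\alpha H(x;\boldsymbol\Theta)})^{\beta}\right)$, for part (iii) take $\phi(x)=\log(1-e^{-\alpha H(x;\boldsymbol\Theta)})$, and for part (ii) take $\phi(x)=\log h(x;\boldsymbol\Theta)$. In each case the right-hand side of the displayed identity is precisely the claimed expression, so the three statements follow immediately. One should note the apparent typo in part (ii), where the argument of $C'$ on the right-hand side is written without the factor $\alpha$ in the exponent; with the general identity in hand the intended statement is clear, and I would simply record the correct form $C'\!\left(\lambda(1-e^{-\alpha H(Y;\boldsymbol\Theta)})^{\beta}\right)$.

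The only genuine point requiring care is integrability: the change-of-measure identity is valid provided $\phi(X)$ is integrable, equivalently $\phi(Y)\,C'(\cdot)$ is integrable under the EEW law. Since $\lambda(1-e^{-\alpha H(y;\boldsymbol\Theta)})^{\beta}$ ranges in $[0,\lambda)\subset(0,s)$, the factor $C'\!\left(\lambda(1-e^{-\alpha H(y;\boldsymbol\Theta)})^{\beta}\right)$ is bounded by $C'(\lambda)<\infty$ (recall $C'$ is increasing with nonnegative coefficients and $\lambda$ lies in the radius of convergence), so integrability reduces to integrability of $\phi(Y)$ itself under the EEW distribution — which is exactly the standing hypothesis that the relevant $\log$ terms are absolutely integrable with respect to $f$. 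I expect this bookkeeping to be the main (and only) obstacle; the algebraic content is just the one-line cancellation between \eqref{fEEWPS} and \eqref{fEEW}.
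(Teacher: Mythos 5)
Your argument is correct. Note that the paper states this proposition without any proof at all, so there is nothing to compare against; your density-ratio identity $f(x)=\frac{\lambda}{C(\lambda)}\,C'\bigl(\lambda(1-e^{-\alpha H(x;\boldsymbol\Theta)})^{\beta}\bigr)\,{\rm g}(x;\alpha,\beta,\boldsymbol\Theta)$, followed by the three instantiations of $\phi$, is exactly the one-line cancellation the authors evidently had in mind, and it supplies the missing proof cleanly. Your handling of integrability is also sound: since $C'$ has nonnegative coefficients and $\lambda(1-e^{-\alpha H(y;\boldsymbol\Theta)})^{\beta}\in[0,\lambda]$ with $C'(\lambda)$ finite by the paper's standing assumptions on $C$, the weight is bounded and the identity holds whenever $\phi(Y)$ is integrable. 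You are likewise right that the missing $\alpha$ in the exponent on the right-hand side of part (ii) (and, one may add, throughout the subsequent Shannon entropy display) is a typographical slip rather than a substantive difference.
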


An explicit expression of Shannon entropy for EEWPS  distribution is obtained as
\begin{eqnarray}
H_{sh}(f)&=&-\log(\alpha\beta\lambda)-\frac{\lambda}{C(\lambda)}E[C'(\lambda(1-e^{-H(Y;\boldsymbol\Theta)})^{\beta})\log(C'(\lambda(1-e^{-H(Y;\boldsymbol\Theta)})^{\beta}))]
\nonumber\\
&&+\log[C(\lambda)]-(\beta-1)\frac{\lambda}{C(\lambda)}E[C'(\lambda(1-e^{-H(Y;\boldsymbol\Theta)})^{\beta})\log(1-e^{-H(Y;\boldsymbol\Theta)})]\nonumber\\
&&-\frac{\lambda}{C(\lambda)}E[C'(\lambda(1-e^{-H(Y;\boldsymbol\Theta)})^{\beta})\log(h(Y;\boldsymbol\Theta))].
\end{eqnarray}

Also,  the mean residual life function of $X$ is given by
\begin{eqnarray}
m(t)=E[X-t|X>t]&=&\frac{\int_{t}^{+\infty}(x-t)f(x)dx}{1-F(t)}=\frac{C(\lambda)\sum\limits_{n=1}^{\infty} p_{n} \int_{t}^{+\infty}z g_{(n)}(z;\alpha,n\beta,\boldsymbol\Theta)dz}{C(\lambda)-C(\lambda G(x))}-t \nonumber\\
&=&\frac{C(\lambda)\sum\limits_{n=1}^{\infty} p_{n} E[ZI_{(Z>t)}]}{C(\lambda)-C(\lambda G(x))}-t,
\end{eqnarray}
where $Z$ has a EEW distribution with parameters $\alpha$, $n\beta$ and $\boldsymbol\Theta$.


\subsection{Reliability and average lifetime}
In the context of reliability, the stress - strength model describes the life of a component which has a random strength
$X$  subjected to a random  stress  $Y$. The component  fails at the instant that the  stress  applied to it  exceeds the  strength, and
the  component  will function satisfactorily whenever $X > Y$. Hence, $R=P(X > Y)$  is a measure of component reliability.
It has many applications especially in engineering concept. Here, we obtain the form for the reliability $R$ when $X$ and $Y$ are
independent random variables having the same EEWPS distribution. The quantity $R$ can be expressed as
\begin{eqnarray}
R&=&\int_{0}^{\infty}f(x;\alpha,\beta,\lambda,\boldsymbol\Theta)F(x;\alpha,\beta,\lambda,\boldsymbol\Theta)dx=\int_{0}^{\infty}\lambda g(x)\frac{C'(\lambda G(x))C(\lambda G(x))}{C^{2}(\lambda)}dx \nonumber\\
&=&\sum\limits_{n=1}^{\infty} p_{n} \int_{0}^{\infty}g_{(n)}(x;\alpha,n\beta,\boldsymbol\Theta)\frac{C(\lambda G(x))}{C(\lambda)}\ dx.
\end{eqnarray}

\section{Estimation}
\label{sec.est}
In this section, we first study the maximum likelihood estimations (MLE's) of the parameters. Then, we propose an  Expectation-Maximization (EM) algorithm to estimate the parameters.

\subsection{The MLE's}

 Let $x_{1},\dots, x_{n}$ be observed value from the EEWPS distribution with parameters ${\boldsymbol \xi}=(\alpha, \beta,\lambda,$ $\boldsymbol\Theta)^T$.
The  log-likelihood function is given by
\begin{eqnarray*}
 l_{n} = l_{n}(\boldsymbol\xi;\boldsymbol{x})&=&n[\log(\alpha)+\log(\beta)+\log(\lambda)-\log(C(\lambda))]+\sum\limits_{i=1}^{n}\log[h(x_{i};\boldsymbol\Theta)]\nonumber\\
 &&-\alpha\sum_{i=1}^n H(x_{i};\boldsymbol\Theta)+(\beta-1)\sum\limits_{i=1}^{n}\log t_{i}+ \sum\limits_{i=1}^{n}\log(C'(\lambda t_{i}^{\beta})),
\end{eqnarray*}
where $\boldsymbol{x}=(x_{1},\dots, x_{n})$ and  $t_{i}=1-e^{-\alpha H(x_{i};\boldsymbol\Theta)}$.
The components of the score function $U({\boldsymbol\xi};{\boldsymbol x})=(\frac{\partial l_{n}}{\partial\alpha},\frac{\partial l_{n}}{\partial\beta},\frac{\partial l_{n}}{\partial\lambda},\frac{\partial l_{n}}{\partial\boldsymbol\Theta})^{T}$ are
\begin{eqnarray}
\frac{\partial l_{n}}{\partial \alpha}&=& \frac{n}{\alpha}-\sum_{i=1}^n H(x_{i};\boldsymbol\Theta)\label{eq.alpha},\\
\frac{\partial l_{n}}{\partial \beta}&=&\frac{n}{\beta}+\sum\limits_{i=1}^{n}\log(t_{i})+\sum\limits_{i=1}^{n}\frac{\lambda t_{i}^{\beta}\log(t_{i})C''(\lambda t_{i}^{\beta})}{C'(\lambda t_{i}^{\beta})}, \label{eq.beta}\\
\frac{\partial l_{n}}{\partial \lambda}&=&\frac{n}{\lambda}-\frac{n C'(\lambda)}{C(\lambda)}+ \sum\limits_{i=1}^{n}\frac{t_{i}^{\beta}C''(\lambda t_{i}^{\beta})}{C'(\lambda t_{i}^{\beta})}, \label{eq.lambda}\\
\frac{\partial l_{n}}{\partial \Theta_{k}}&=&\sum\limits_{i=1}^{n}\frac{\partial h(x_{i};\boldsymbol\Theta)}{\partial \Theta_{k}}.\frac{1}{h(x_{i};\boldsymbol\Theta)}-\alpha\sum_{i=1}^n \frac{\partial H(x_{i};\boldsymbol\Theta)}{\partial \Theta_k}\nonumber\\
&&+(\beta-1)\sum\limits_{i=1}^{n}\frac{\frac{\partial t_{i}}{\partial \Theta_{k}}}{t_{i}}+\beta \lambda\sum\limits_{i=1}^{n}\frac{[\frac{\partial t_{i}}{\partial \Theta_{k}}] t_{i}^{\beta-1}C''(\lambda t_{i}^{\beta})}{C'(\lambda t_{i}^{\beta})}, \label{eq.theta}
 \end{eqnarray}
 where $\Theta_k$ is the $k$th element of the vector $\boldsymbol\Theta$.

The MLE of ${\boldsymbol \xi}$, say $\hat{\boldsymbol\xi}$, is obtained by solving the nonlinear system $U({\boldsymbol\xi};{\boldsymbol x})={\boldsymbol 0}$. We cannot get an explicit form for this nonlinear system of equations and they can be calculated by using a numerical method, like the Newton method or the bisection method. Only,
for given $\boldsymbol \Theta$, from \eqref{eq.alpha} we have
$$
\alpha=\frac{n}{\sum_{i=1}^n H(x_{i};\boldsymbol\Theta)}.
$$
Therefore, \eqref{eq.theta} becomes
\begin{eqnarray}
&&\sum\limits_{i=1}^{n}\frac{\partial h(x_{i};\boldsymbol\Theta)}{\partial \Theta_{k}}.\frac{1}{h(x_{i};\boldsymbol\Theta)}-\frac{n}{\sum_{i=1}^n H(x_{i};\boldsymbol\Theta)}\sum_{i=1}^n \frac{\partial H(x_{i};\boldsymbol\Theta)}{\partial \Theta_k}\nonumber\\
&&+(\beta-1)\sum\limits_{i=1}^{n}\frac{\frac{\partial t_{i}}{\partial \Theta_{k}}}{t_{i}}+\beta \lambda\sum\limits_{i=1}^{n}\frac{[\frac{\partial t_{i}}{\partial \Theta_{k}}] t_{i}^{\beta-1}C''(\lambda t_{i}^{\beta})}{C'(\lambda t_{i}^{\beta})}. \label{eq.theta2}
\end{eqnarray}

\begin{theorem}
The pdf, $f(x|{\boldsymbol \Theta})$, of EEWPS distribution satisfies on the regularity condistions, i.e.
\begin{itemize}
\item[i.] the support of $f(x|{\boldsymbol \Theta})$ does not depend on  ${\boldsymbol \Theta}$,
\item[ii.] $f(x|{\boldsymbol \Theta})$ is twice continuously differentiable with respect to ${\boldsymbol \Theta}$,
\item[iii.] the differentiation and integration are interchangeable in the sense that
\end{itemize}
$$
\frac{\partial}{\partial {\boldsymbol \Theta}} \int_{-\infty}^{\infty}f(x|{\boldsymbol \Theta})dx= \int_{-\infty}^{\infty} \frac{\partial}{\partial {\boldsymbol \Theta}}f(x|{\boldsymbol \Theta})dx, \ \ \ \frac{\partial^2}{\partial {\boldsymbol \Theta}\partial{\boldsymbol \Theta}^T} \int_{-\infty}^{\infty}f(x|{\boldsymbol \Theta})dx= \int_{-\infty}^{\infty} \frac{\partial^2}{\partial {\boldsymbol \Theta}\partial{\boldsymbol \Theta}^T}f(x|{\boldsymbol \Theta})dx.
$$

\end{theorem}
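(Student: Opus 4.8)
The plan is to dispatch (i) and (ii) directly from the structure of \eqref{fEEWPS} and the hypotheses on $H$ and $C$, and to reduce (iii) to the classical Leibniz/dominated‑convergence criterion.

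For (i): by the standing assumptions on $H(x;\boldsymbol\Theta)$ (non‑negative, continuous, strictly increasing, with $H\to 0^+$ as $x\to 0^+$ and $H\to\infty$ as $x\to\infty$), for every admissible $\boldsymbol\Theta$ we have $t=1-e^{-\alpha H(x;\boldsymbol\Theta)}\in(0,1)$ precisely for $x\in(0,\infty)$; moreover $h(x;\boldsymbol\Theta)\ge 0$ is positive off a null set and $C'(\cdot)>0$ on $(0,s)$ because $a_n\ge 0$. Hence the set $\{x:f(x\mid\boldsymbol\Theta)>0\}$ equals $(0,\infty)$ for all $\boldsymbol\Theta$, so the support does not depend on $\boldsymbol\Theta$. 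For (ii): $C$ possesses three continuous derivatives on $(0,s)$ by the definition in \eqref{fps}, and for $x>0$ the argument $\lambda t^{\beta}$ lies in $(0,\lambda)\subset(0,s)$, so $\boldsymbol\Theta\mapsto C'(\lambda t^{\beta})$ is twice continuously differentiable. Assuming, as is implicit in all the constructions of Section~\ref{sec.spe}, that $H(\cdot;\boldsymbol\Theta)$ and $h(\cdot;\boldsymbol\Theta)$ are twice continuously differentiable in $\boldsymbol\Theta$ on the interior of the parameter space, the density \eqref{fEEWPS} is a finite product of factors each of class $C^{2}$ in $\boldsymbol\Theta$ (namely $h(x;\boldsymbol\Theta)$, $e^{-\alpha H(x;\boldsymbol\Theta)}$, $t^{\beta-1}$, $C'(\lambda t^{\beta})$ and smooth functions of the scalar parameters), and products, compositions and exponentials of $C^{2}$ maps are $C^{2}$; this gives (ii).

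For (iii): since $\int_{0}^{\infty} f(x\mid\boldsymbol\Theta)\,dx\equiv 1$, the asserted identities amount to $\int_{0}^{\infty}\partial f/\partial\boldsymbol\Theta\,dx=\boldsymbol 0$ and $\int_{0}^{\infty}\partial^{2} f/\partial\boldsymbol\Theta\,\partial\boldsymbol\Theta^{T}\,dx=\boldsymbol 0$. By the standard theorem on differentiating under the integral sign, it suffices to produce, for each $\boldsymbol\Theta_{0}$ in the interior of the parameter space, a compact neighbourhood $V$ of $\boldsymbol\Theta_{0}$ and an integrable function $M(x)$ on $(0,\infty)$ with $\bigl|\partial f/\partial\Theta_{k}\bigr|\le M(x)$ and $\bigl|\partial^{2} f/\partial\Theta_{k}\partial\Theta_{l}\bigr|\le M(x)$ for all $\boldsymbol\Theta\in V$. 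The derivatives in question have the shape exhibited in \eqref{eq.theta}: each summand is a product of (a) factors built from $h$, $H$ and their $\boldsymbol\Theta$‑derivatives together with powers of $t$ and ratios $C''(\lambda t^{\beta})/C'(\lambda t^{\beta})$, which are jointly continuous on $(0,\infty)\times V$ and bounded on each $[\varepsilon,R]\times V$, and (b) the dominating exponential $e^{-\alpha H(x;\boldsymbol\Theta)}$, controlled from above by $e^{-\alpha_{0}H(x;\boldsymbol\Theta)}$ for $\alpha_{0}=\min_{V}\alpha>0$. I would split $(0,\infty)=(0,\varepsilon]\cup[\varepsilon,R]\cup[R,\infty)$: on $[\varepsilon,R]$ continuity and compactness give a finite bound; on $[R,\infty)$ the exponential decay of $e^{-\alpha_{0}H}$ outpaces the at‑most polynomial growth of the remaining factors; near $0^{+}$ the factor $t^{\beta-1}\sim(\alpha H)^{\beta-1}$ together with $h=H'$ keeps the integrand integrable uniformly over $V$. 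Patching these bounds yields $M$, and the theorem on Leibniz differentiation then delivers both interchange identities.

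The main obstacle is step (iii), and within it the uniform domination near $x=0^{+}$ and as $x\to\infty$: for a generic baseline $H$ one needs mild local boundedness of $h$, $H$ and their first two $\boldsymbol\Theta$‑derivatives relative to powers of $H$, which holds in every special case of Section~\ref{sec.spe} but must either be imposed as a blanket hypothesis on the baseline or checked case by case. Parts (i) and (ii) are routine once the smoothness assumptions on $(H,h)$ are made explicit.
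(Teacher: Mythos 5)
Your proposal is correct in outline but takes a genuinely different route from the paper, because the paper does not actually argue the theorem at all: its entire proof is the sentence ``The proof is obvious and for more details, see Casella and Berger, Section 10.'' You instead supply the verification that such a citation presupposes. For (i) you identify the support as $(0,\infty)$ directly from the positivity of $h$, of $t=1-e^{-\alpha H(x;\boldsymbol\Theta)}$ and of $C'$ on $(0,s)$; for (ii) you reduce to the $C^{2}$ regularity of the factors of \eqref{fEEWPS}; and for (iii) you invoke the Leibniz/dominated-convergence criterion and sketch the construction of a dominating function by splitting $(0,\infty)$ into a neighbourhood of $0$, a compact middle piece, and a tail controlled by $e^{-\alpha_{0}H}$. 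What your approach buys is an honest accounting of the hypotheses the result actually needs: you correctly observe that the theorem is not provable for a completely arbitrary baseline $(H,h)$ and that twice continuous differentiability of $H$ and $h$ in $\boldsymbol\Theta$, together with mild growth control of their $\boldsymbol\Theta$-derivatives relative to powers of $H$, must be imposed or checked for each special case of Section~\ref{sec.spe} --- assumptions the paper never states. What the paper's one-line citation buys is brevity, at the cost of leaving unverified precisely the case-dependent domination step that you identify as the main obstacle. Your sketch of that step is plausible and would go through in all the listed submodels, though to be a complete proof it should state the blanket hypotheses explicitly and note that the relevant derivatives are those of $f$ itself rather than of the log-likelihood displayed in \eqref{eq.theta}.
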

\begin{proof}
The proof is obvious and for more details, see \cite{ca-be-01} Section 10.
\end{proof}

The asymptotic confidence intervals of these parameters will be derived based on  Fisher information matrix.
It is well-known that under regularity conditions, the asymptotic distribution of
$\sqrt{n}(\hat{\boldsymbol\xi}-{\boldsymbol\xi})$ is multivariate normal with mean ${\boldsymbol 0}$ and variance-covariance matrix $J_n^{-1}({\boldsymbol\xi})$, 
where $J_n({\boldsymbol\xi})=\lim_{n\rightarrow \infty} I_n({\boldsymbol\xi})$, and $I_n({\boldsymbol\xi})$ is the observed information matrix as
\[I_n\left(\boldsymbol\xi\right)=-\left[ \begin{array}{ccccc}
U_{\alpha \alpha } &U_{\alpha \beta } & U_{\alpha\lambda } & | & U^{T}_{\alpha \boldsymbol\Theta} \\
U_{\alpha \beta } &U_{\beta \beta } & U_{\beta\lambda } & | & U^{T}_{\beta \boldsymbol\Theta} \\
U_{\alpha \beta }&U_{\lambda \beta } & U_{\lambda \lambda } &| &U^{T}_{\lambda \boldsymbol\Theta } \\
-&- & - &- &- \\
U_{\alpha \boldsymbol\Theta } &U_{\beta \boldsymbol\Theta } & U_{\lambda \boldsymbol\Theta } & | & U_{\boldsymbol\Theta\boldsymbol\Theta}\\
\end{array} \right],\]
whose elements are obtained by derivative the equations \eqref{eq.alpha}-\eqref{eq.theta} with respect to parameters.

\subsection{ EM-algorithm}

The traditional methods to obtain the MLE's are numerical methods for solving the equations \eqref{eq.alpha}-\eqref{eq.theta}, and sensitive to the initial values. Therefore, we develop an EM algorithm for obtaining the MLE's of the  parameters of EEWPS class of distributions. It
is a very powerful tool in handling the incomplete data problem
\citep{de-la-ru-77}.
It is an iterative method, and  there are two steps in each iteration: Expectation step or the E-step and the Maximization step or the M-step. The EM algorithm is especially useful if the complete data set is easy to analyze.

Using \eqref{eq.PXN}, we define a hypothetical complete-data distribution with a joint pdf in the form
$$
g(x,z;{\boldsymbol\xi})=\frac{a_{z}\lambda^{z}}{C(\lambda)}z\alpha\beta h(x;\boldsymbol\Theta) (1-t)t^{z\beta-1}, \ \ \ x>0, \ \ \ z\in {\mathbb N},
$$
where $t=1-e^{-\alpha H(x;\boldsymbol\Theta)}$.
 The E-step of an EM cycle requires the expectation of $(Z|X;{\boldsymbol\xi}^{(r)})$ where ${\boldsymbol\xi}^{(r)}=(\alpha^{(r)},\beta^{(r)},\lambda^{(r)},\boldsymbol{\Theta}^{(r)})$ is the current estimate (in the $r$th iteration) of ${\boldsymbol\xi}$.
The expected value of $Z|X=x$ is
\begin{eqnarray}
E(Z|X=x)=
1+\frac{\lambda t^{\alpha}C''(\lambda t^{\beta})}{C'(\lambda t^{\beta})}.
\end{eqnarray}

The M-step of EM cycle is completed by using the MLE over $\boldsymbol\Theta$, with the missing $z$'s replaced by their conditional expectations given
above.
Therefore, the log-likelihood for the complete-data ${\boldsymbol y}=(x_1,\dots,x_n, z_1,...,z_n)$  is
\begin{eqnarray} \label{eq.ls}
l^{\ast}(\boldsymbol {y};\boldsymbol\xi)&\propto& \sum\limits_{i=1}^{n}z_{i}\log(\lambda)
+n\log(\alpha\beta)
+\sum\limits_{i=1}^{n}\log h(x_{i};\boldsymbol{\Theta})
-\alpha\sum\limits_{i=1}^{n}H(x_{i};\boldsymbol{\Theta})\nonumber\\
&&+\sum\limits_{i=1}^{n}(z_{i}\beta-1)\log(1-e^{-\alpha H(x_{i};\boldsymbol\Theta)})
-n\log(C(\lambda)).
\end{eqnarray}
On differentiation of (\ref{eq.ls}) with respect to parameters $\alpha$, $\beta$, $\lambda$ and $\Theta_{k}$, we obtain the components of the score function
as
\begin{eqnarray*}
\frac{\partial l^{\ast}_{n}}{\partial\alpha}&=&\frac{n}{\alpha}-\sum\limits_{i=1}^{n}H(x_{i};\boldsymbol{\Theta})+\sum\limits_{i=1}^{n}(z_{i}\beta-1)\frac{H(x_{i};\boldsymbol{\Theta})e^{-\alpha H(x_{i};\boldsymbol\Theta)}}{1-e^{-\alpha H(x_{i};\boldsymbol\Theta)}},\\
\frac{\partial l^{\ast}_{n}}{\partial\beta}&=&\frac{n}{\beta}+\sum\limits_{i=1}^{n}z_{i}\log(1-e^{-\alpha H(x_{i};\boldsymbol\Theta)}),\\
\frac{\partial l^{\ast}_{n}}{\partial\lambda}&=&\sum\limits_{i=1}^{n}\frac{z_{i}}{\lambda}-n\frac{C'(\lambda)}{C(\lambda)},\\
\frac{\partial l^{\ast}_{n}}{\partial\Theta_{k}}&=&\sum\limits_{i=1}^{n}\frac{\partial h(x_{i};\boldsymbol\Theta)}{\partial \Theta_{k}}.\frac{1}{h(x_{i};\boldsymbol\Theta)}-\alpha\sum\limits_{i=1}^{n}\frac{\partial H(x_{i};\boldsymbol\Theta)}{\partial \Theta_{k}}+\sum\limits_{i=1}^{n}(z_{i}\beta-1)\frac{\frac{\partial H(x_{i};\boldsymbol\Theta)}{\partial \Theta_{k}}}{1-e^{-\alpha H(x_{i};\boldsymbol\Theta)}}.
\end{eqnarray*}
Therefore, we obtain the iterative procedure of the EM-algorithm as
\begin{eqnarray*}
&&\hat{\beta}^{(j+1)}=\frac{-n}{\sum\limits_{i=1}^{n}\hat{z}_{i}^{(j)}\log[1-e^{-\hat{\alpha}^{(j)}H(x_{i};\hat{\boldsymbol \Theta}^{(j)})}]},  \\ &&\hat{\lambda}^{(j+1)}=\frac{C(\hat{\lambda}^{(j+1)})}{nC'(\hat{\lambda}^{(j+1)})}\sum\limits_{i=1}^{n}\hat{z}_{i}^{(j)},\\
&&\frac{n}{\hat{\alpha}^{(j+1)}}-\sum\limits_{i=1}^{n}H(x_{i};\hat{\boldsymbol\Theta}^{(j)})+\sum\limits_{i=1}^{n}(\hat{z_{i}}^{(j)}\hat{\beta}^{(j)}-1)
\frac{H(x_{i};\hat{\boldsymbol\Theta}^{(j)})e^{-\hat{\alpha}^{(j+1)} H(x_{i};\hat{\boldsymbol\Theta}^{(j)})}}{1-e^{-\hat{\alpha}^{(j+1)} H(x_{i};\hat{\boldsymbol\Theta}^{(j)})}}=0,
\\
&&\sum\limits_{i=1}^{n}\frac{\partial h(x_{i};\hat{\boldsymbol\Theta}^{(j+1)})}{\partial {\Theta_{k}}}.\frac{1}{h(x_{i};\hat{\boldsymbol\Theta}^{(j+1)})}-\hat{\alpha}^{(j)}\sum\limits_{i=1}^{n}\frac{\partial H(x_{i};\hat{\boldsymbol\Theta}^{(j+1)})}{\partial {\Theta_{k}}}\\
&& \qquad \qquad+\sum\limits_{i=1}^{n}\frac{\partial H(x_{i};\hat{\boldsymbol\Theta}^{(j+1)})}{\partial {\Theta_{k}}}.
\frac{\hat{z_{i}}^{(j)}\hat{\beta}^{(j)}-1}{1-e^{-\hat{\alpha}^{(j)} H(x_{i};\hat{\boldsymbol\Theta}^{(j+1)})}}=0,
\end{eqnarray*}
where $\hat{\lambda}^{(j+1)}$ , $\hat{\alpha}^{(j+1)}$ and $\hat{\Theta}_{k}^{(j+1)}$ are found numerically. Here,  we have
$$
\hat{z}_{i}^{(j)}=1+\frac{\lambda^{*(j)}C''(\lambda^{*(j)})}{C'(\lambda^{*(j)})}, \ \ i=1,2,...,n,
$$
where
$\lambda^{*(j)}=\hat{\lambda}^{(j)}[1-e^{-\hat{\alpha}^{(j)}H(x_{i};\hat{\Theta_{k}}^{(j)})}]^{\hat{\beta}^{(j)}}$.

We can use the results of
\cite{lou-82}
to obtain the standard errors of the estimators from the EM-algorithm. Consider
$\ell_{c}({\boldsymbol\Theta};{\boldsymbol x})=E(I_{c}({\boldsymbol\Theta};{\boldsymbol y})|{\boldsymbol x})$,
 where
$I_{c}({\boldsymbol\Theta};{\boldsymbol y})=-[\frac{\partial U({\boldsymbol y};{\boldsymbol\Theta})}{\partial{\boldsymbol\Theta}}]$
is the $(k+3)\times (k+3)$ observed information matrix. If
$
\ell_{m}({\boldsymbol\Theta};{\boldsymbol x})=Var[U({\boldsymbol y};{\boldsymbol\Theta})|{\boldsymbol x}]$, then, we obtain the observed information as
\begin{equation}\label{EM information}
J(\hat{\boldsymbol\Theta};{\boldsymbol x})={\ell }_c(\hat{ \boldsymbol\Theta};{\boldsymbol x})-{\ell }_m(\hat{\boldsymbol\Theta};{\boldsymbol x}).
\end{equation}
The standard errors of the MLE's based on the EM-algorithm are the square root of the diagonal elements of the $J(\hat{\boldsymbol\Theta};{\boldsymbol x})$. The computation of these matrices are too long and tedious. Therefore, we did not present the details. Reader can see \cite{ma-ja-12} how to calculate these values.

\section{A real example}
\label{sec.ex}

In this section, we analyze the real data set given by
\cite{mu-xi-ji-04}
to demonstrate the performance of EEWPS class of distributions in practice.
This data set consists of the failure times of 20 mechanical components, and is also studied by \cite{si-bo-di-co-13}:

\begin{center}
0.067, 0.068, 0.076, 0.081, 0.084, 0.085, 0.085, 0.086, 0.089, 0.098

0.098, 0.114, 0.114, 0.115, 0.121, 0.125, 0.131, 0.149, 0.160, 0.485
\end{center}

Since the EEWPS distribution can be used for  modeling of  failure times,
we consider  this distribution for fitting these data. But, this  distribution is a large class of distributions. Here, we consider five sub-models of EEWPS distribution. Some of them are suggested in literature.

\noindent i. The exponentiated Weibull geometric (EWG) distribution, i.e.  the EEWPS distribution with $H(x,\boldsymbol \Theta)=x^\gamma$  and $C(\lambda)=\lambda(1-\lambda)^{-1}$.

\noindent ii. The complementary Weibull geometric (CWG) distribution, i.e.  the EEWPS distribution with $H(x,\boldsymbol \Theta)=x^\gamma$, $C(\lambda)=\lambda(1-\lambda)^{-1}$ and $\beta=1$. This distribution  is  considered by \cite{co-si-14}.

\noindent iii. The generalized exponential geometric (GEG) distribution, i.e.  the EEWPS distribution with $H(x,\boldsymbol \Theta)=x$ and $C(\lambda)=\lambda(1-\lambda)^{-1}$. This distribution  is  considered by \cite{ma-ja-12}.

\noindent iv. The exponentiated Chen logarithmic (ECL) distribution, i.e.  the EEWPS distribution with $H(x,\boldsymbol \Theta)=\exp( x^\gamma)$ and $C(\lambda)=-\log(1-\lambda)$.

\noindent iv. The complementary Chen logarithmic (CCL) distribution, i.e.  the EEWPS distribution with $H(x,\boldsymbol \Theta)=\exp(x^\gamma)$, $C(\lambda)=-\log(1-\lambda)$ and $\beta=1$. This distribution  is  considered by \cite{co-si-14}.

The MLE's of the parameters  for the distributions are obtained by the EM algorithm given in Section \ref{sec.est}. Also, the standard errors of MLE's  are computed and given in paracenteses.
To test the goodness-of-fit of the distributions, we calculated the maximized log-likelihood ($\log (L)$), the Kolmogorov-Smirnov (K-S) statistic with its respective p-value, the AIC (Akaike Information Criterion), AICC (AIC with correction), BIC (Bayesian Information Criterion), CM (Cramer-von Mises statistic) and AD (Anderson-Darling statistic)  for the five submodels of distribution.
 The R software \citep{rdev-14}  is used for the computations.

 The results are given in Table \ref{table.EX1}, and from K-S, it can be concluded that all five models are appropriate for this data set.  But,  the EWG and ECL distributions are better than other distributions. In fact, we have a better fit when there is the parameter $\beta$ (exponentiated parameter) in model.
 The plots of the densities (together with the data histogram)  and cdf's given in Figure \ref{plot.EX1} confirm this conclusion.

\begin{table}[ht]
\begin{center}
\caption{Parameter estimates (standard errors), K-S statistic,
\textit{p}-value, AIC, AICC, BIC, CM and AD for the data set.}\label{table.EX1}
{\small
\begin{tabular}{|c|c|c|c|c|c|} \hline
Distribution & EWG & CWG & GEG & ECL & CCL \\ \hline
$\hat{\alpha }$ & 28.665 & 25.972 & 27.752 & 17.111 & 22.019 \\
(s.e.) & (4.617) & (11.093) & (6.841) & (1.572) & (10.177) \\ \hline
$\hat{\gamma }$ & 0.199 & 1.642 & --- & 0.136 & 1.586 \\
(s.e.) & (0.052) & (0.407) & --- & (0.026) & (0.231) \\ \hline
$\hat{\lambda }$ & 0.136 & 0.012 & 0.001 & 0. 146 & 0.261 \\
(s.e.) & ~(0.918) & (1.122) & (0.658) & ~(0.032) & (0.332) \\ \hline
$\hat{\beta }$ & $5.5\ e^7$ & --- & 13.825 & $7.4\ e^7$ & --- \\
(s.e.) & ($1.6\ e^8$) & --- & (8.471) & $2.3\ e^7$ & --- \\ \hline
${\log  \left(L\right)\ }$ & 37.978 & 26.422 & 32.976 & 37.794 & 25.759 \\
K-S & 0.124 & 0.264 & 0.160 & 0.121 & 0.262 \\
p-value & 0.917 & 0.122 & 0.683 & 0.931 & 0.127 \\
AIC & -67.957 & -46.845 & -59.952 & -67.588 & -45.518 \\
AICC & -65.29 & -45.345 & -58.452 & -64.922 & -44.018 \\
BIC & -63.974 & -43.858 & -56.965 & -63.606 & -42.531 \\
CM & 0.048 & 0.436 & 0.153 & 0.051 & 0.463 \\
AD & 0.402 & 2.537 & 1.136 & 0.423 & 2.663 \\ \hline
\end{tabular}

}
\end{center}
\end{table}

\begin{figure}
\centering
\includegraphics[width=7.75cm,height=7cm]{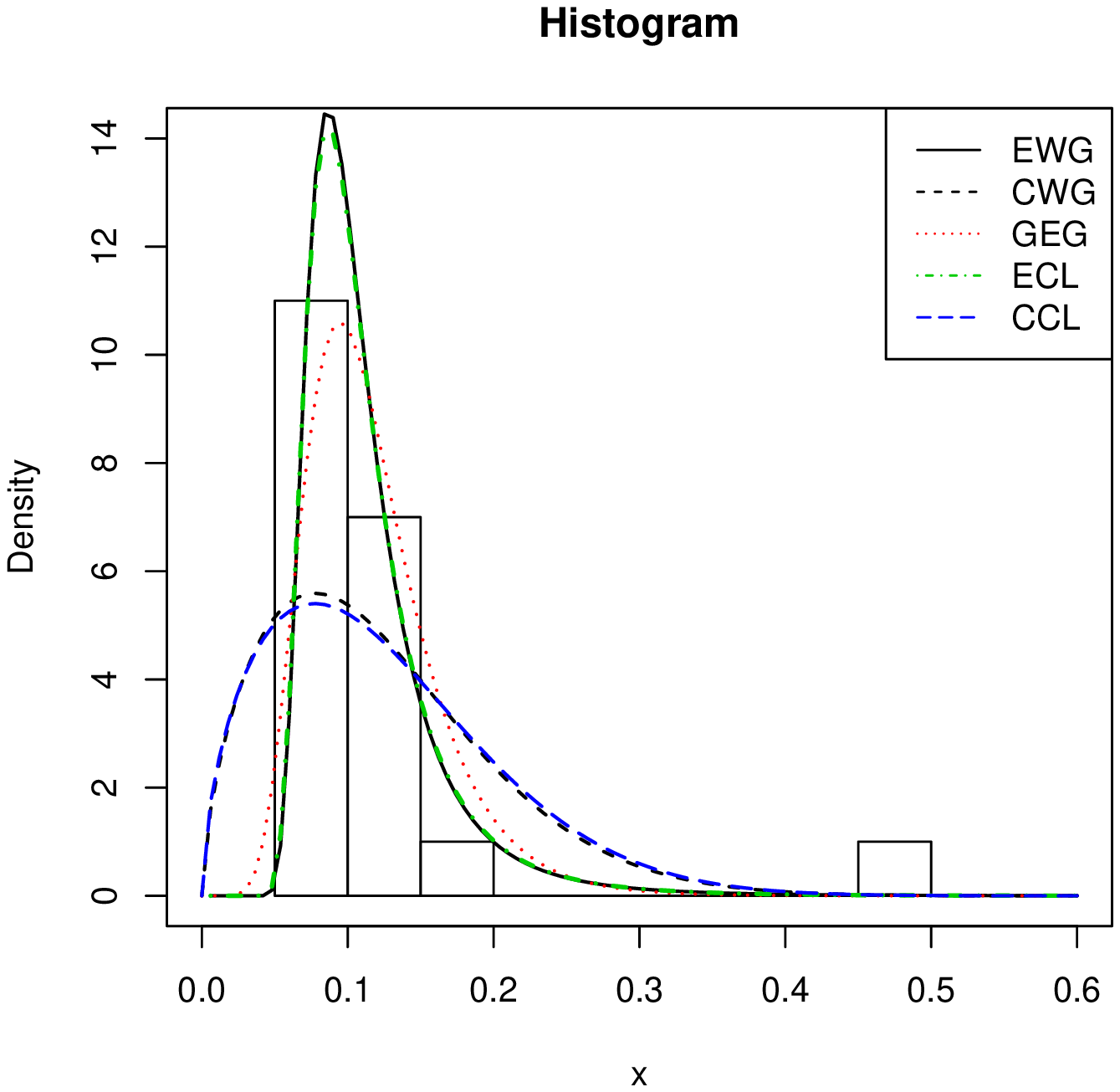}
\includegraphics[width=7.75cm,height=7cm]{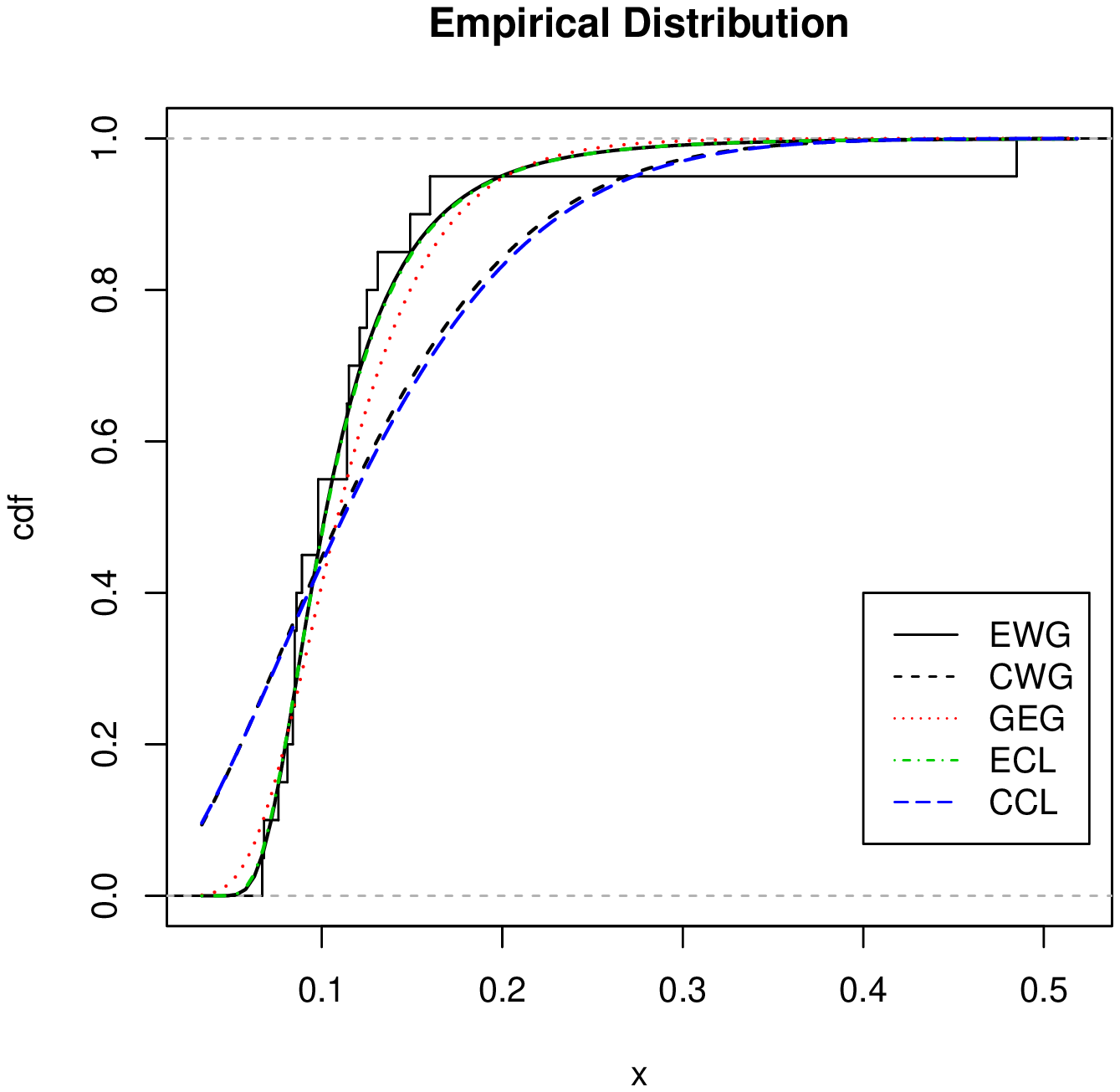}
\vspace{-1cm}
\caption{The histogram of the data set with the estimated pdf's (left), the empirical cdf of the data set, and estimated cdf's (right) for  fitted  of five submodels.}\label{plot.EX1}
\end{figure}

\section*{Acknowledgments}
The authors are thankful to the referees for helpful comments and suggestions.

\bibliographystyle{apa}

\end{document}